\newtheorem{definition}{Definition}
\newtheorem{proposition}[definition]{Proposition}
\newtheorem{lemma}[definition]{Lemma}
\newtheorem{theorem}[definition]{Theorem}
\newtheorem{corollary}[definition]{Corollary}
\def\squareforqed{\hbox{\rlap{$\sqcap$}$\sqcup$}}
\def\qed{\ifmmode\squareforqed\else{\unskip\nobreak\hfil
\penalty50\hskip1em\null\nobreak\hfil\squareforqed
\parfillskip=0pt\finalhyphendemerits=0\endgraf}\fi}
\def\endenv{\ifmmode\;\else{\unskip\nobreak\hfil
\penalty50\hskip1em\null\nobreak\hfil\;
\parfillskip=0pt\finalhyphendemerits=0\endgraf}\fi}
\newenvironment{proof}{\noindent \textbf{{Proof~} }}{\qed}
\newcommand{\nc}{\newcommand}
\nc{\rnc}{\renewcommand}
\nc{\beg}{\begin{equation}}
\nc{\eeq}{{\end{equation}}}
\nc{\beqa}{\begin{eqnarray}}
\nc{\eeqa}{\end{eqnarray}}
\nc{\lbar}[1]{\overline{#1}}
\nc{\bra}[1]{\langle#1|}
\nc{\ket}[1]{|#1\rangle}
\nc{\ketbra}[2]{|#1\rangle\!\langle#2|}
\nc{\braket}[2]{\langle#1|#2\rangle}
\nc{\proj}[1]{| #1\rangle\!\langle #1 |}
\nc{\avg}[1]{\langle#1\rangle}
\nc{\Rank}{\operatorname{Rank}}
\nc{\smfrac}[2]{\mbox{$\frac{#1}{#2}$}}
\nc{\tr}{\operatorname{Tr}}
\nc{\ox}{\otimes}
\nc{\dg}{\dagger}
\nc{\dn}{\downarrow}
\nc{\cA}{{\cal A}}
\nc{\cB}{{\cal B}}
\nc{\cC}{{\cal C}}
\nc{\cD}{{\cal D}}
\nc{\cE}{{\cal E}}
\nc{\cF}{{\cal F}}
\nc{\cG}{{\cal G}}
\nc{\cH}{{\cal H}}
\nc{\cI}{{\cal I}}
\nc{\cJ}{{\cal J}}
\nc{\cK}{{\cal K}}
\nc{\cL}{{\cal L}}
\nc{\cM}{{\cal M}}
\nc{\cN}{{\cal N}}
\nc{\cO}{{\cal O}}
\nc{\cP}{{\cal P}}
\nc{\cQ}{{\cal Q}}
\nc{\cR}{{\cal R}}
\nc{\cS}{{\cal S}}
\nc{\cT}{{\cal T}}
\nc{\cX}{{\cal X}}
\nc{\cZ}{{\cal Z}}
\nc{\csupp}{{\operatorname{csupp}}}
\nc{\qsupp}{{\operatorname{qsupp}}}
\nc{\var}{{\operatorname{var}}}
\nc{\rar}{\rightarrow}
\nc{\lrar}{\longrightarrow}
\nc{\polylog}{{\operatorname{polylog}}}
\nc{\wt}{{\operatorname{wt}}}
\nc{\av}[1]{{\left\langle {#1} \right\rangle}}
\def\S{\Sigma}
\nc{\RR}{{{\mathbb R}}}
\nc{\CC}{{{\mathbb C}}}
\nc{\FF}{{{\mathbb F}}}
\nc{\NN}{{{\mathbb N}}}
\nc{\ZZ}{{{\mathbb Z}}}
\nc{\PP}{{{\mathbb P}}}
\nc{\QQ}{{{\mathbb Q}}}
\nc{\UU}{{{\mathbb U}}}
\nc{\EE}{{{\mathbb E}}}
\nc{\id}{{\operatorname{id}}}
\nc{\CHSH}{{\operatorname{CHSH}}}
\nc{\be}{\begin{equation}}
\nc{\ee}{{\end{equation}}}
\nc{\bea}{\begin{eqnarray}}
\nc{\eea}{\end{eqnarray}}
\nc{\Hom}[2]{\mbox{Hom}(\CC^{#1},\CC^{#2})}
\nc{\rU}{\mbox{U}}
\nc{\ob}[1]{#1}
\nc{\SEP}{{\text{SEP}}}
\nc{\NS}{{\text{NS}}}
\nc{\LOCC}{{\text{LOCC}}}
\nc{\PPT}{{\text{PPT}}}
\nc{\EXT}{{\text{EXT}}}
\nc{\Sym}{{\operatorname{Sym}}}
\nc{\ERLO}{{E_{\text{r,LO}}}}
\nc{\ERLOCC}{{E_{\text{r,LOCC}}}}
\nc{\ERPPT}{{E_{\text{r,PPT}}}}
\nc{\ERLOCCinfty}{{E^{\infty}_{\text{r,LOCC}}}}
\nc{\Aram}{{\operatorname{\sf A}}}
\begin{document}
%
\title{On the quantum no-signalling assisted \\zero-error classical simulation cost of \\non-commutative bipartite graphs}



%
\author{\IEEEauthorblockN{Xin Wang\IEEEauthorrefmark{1},
Runyao Duan\IEEEauthorrefmark{1}\IEEEauthorrefmark{2}}
\IEEEauthorblockA{\IEEEauthorrefmark{1}Centre for Quantum Computation and Intelligent Systems\\ Faculty of Engineering and Information Technology\\
University of Technology Sydney (UTS),
NSW 2007, Australia}
\IEEEauthorblockA{\IEEEauthorrefmark{2}UTS-AMSS Joint Research Laboratory for Quantum Computation and Quantum Information Processing\\ Academy of Mathematics and Systems Science\\ Chinese Academy of Sciences, Beijing 100190, China}
Email: xin.wang-8@student.uts.edu.au, runyao.duan@uts.edu.au}


\maketitle

\begin{abstract}
Using one channel to simulate another exactly with the aid of quantum no-signalling correlations has been studied recently. The one-shot no-signalling assisted classical zero-error simulation cost of non-commutative bipartite graphs has been formulated as semidefinite programms  [Duan and Winter, IEEE Trans. Inf. Theory 62, 891 (2016)]. Before our work, it was unknown whether the one-shot (or asymptotic) no-signalling assisted zero-error classical simulation cost for general non-commutative graphs is multiplicative (resp. additive) or not.  In this paper we address these issues and give a general sufficient condition for the multiplicativity of the one-shot simulation cost and the additivity of the asymptotic simulation cost of non-commutative bipartite graphs, which include all known cases such as extremal graphs and classical-quantum graphs. Applying this condition, we exhibit a large class of so-called \emph{cheapest-full-rank graphs} whose  asymptotic zero-error simulation cost is given by the one-shot simulation cost. Finally, we disprove the multiplicativity of one-shot simulation cost by explicitly constructing a special class of qubit-qutrit non-commutative bipartite graphs. 
\end{abstract}


%
\IEEEpeerreviewmaketitle

\section{Introduction}
Channel simulation is a fundamental problem in information theory, which concerns how to use a  channel
$\cN$ from Alice (A) to Bob (B) to simulate another channel $\cM$ also from A to B \cite{KretschmannWerner:tema}.
Shannon's celebrated noisy channel coding theorem determines the capability of any noisy channel $\cN$ to simulate an noiseless channel \cite{Shannon1948} and the dual theorem ``reverse Shannon theorem'' was proved recently \cite{BSST2003}. According to different resources available between A and B, this simulation problem has many variants and the case when A and B share unlimited amount of entanglement has been completely solved \cite{BSST2003}. To optimally simulate $\cM$ in the asymptotic setting, the rate is determined by the entanglement-assisted classical capacity of $\cN$ and  $\cM$ \cite{BDHS+2009,QRST-simple}. Furthermore, this rate cannot be improved even with no-signalling correlations or feedback \cite{BDHS+2009}.

In the zero-error setting \cite{Shannon1956} , recently the quantum zero-error information theory has been studied and the problem becomes more complex since many unexpected phenomena were observed such as the super-activation of noisy channels \cite{Duan2009,DS2008,  CCH2009, CS2012} as well as  the assistance of shared entanglement in zero-error communication \cite{CLMW2010, LMMO+2012}.

Quantum no-signalling correlations (QNSC) are introduced as two-input and two-output quantum channels with
the no-signalling constraints. And such correlations have been studied in the relativistic causality of quantum
operations \cite{BGNP2001,ESW2001,PHHH2006,OCB2012}. Cubitt et al. \cite{CLMW2011} first introduced classical no-signalling correlations into the zero-error classical communication problem. They also observed a kind of reversibility between no-signalling assisted zero-error capacity and exact simulation \cite{CLMW2011}. Duan and Winter \cite{Duan2014} further introduced quantum non-signalling correlations into the zero-error communication problem and formulated  both capacity and simulation cost problems as semidefinite programmings (SDPs) \cite{SDP} which depend only on the non-commutative bipartite graph $K$.
To be specific,  QNSC is a bipartite completely positive and trace-preserving linear map $\Pi: \cL(\cA_i)\otimes \cL(\cB_i)\rightarrow \cL(\cA_o)\otimes \cL(\cB_o)$, where the subscripts $i$ and $o$ stand for input and output, respectively.
Let the  Choi-Jamio\l{}kowski matrix of $\Pi$ be
$\Omega_{\cA_i'\cA_o\cB_i'\cB_o}= (\1_{\cA_i'}\otimes \1_{\cB_i'}\otimes \Pi) (\Phi_{\cA_i\cA_i'}\otimes \Phi_{\cB_i\cB_i'})$, where 
$\Phi_{\cA_i\cA_i'}= \ketbra{\Phi_{\cA_i\cA_i'}}{\Phi_{\cA_i\cA_i'}}$, and $\ket{\Phi_{\cA_i\cA_i'}}=\sum_k \ket{k_{\cA_i}}\ket{k_{\cA_i'}}$ is the un-normalized  maximally-entangled state.The following constraints are required for $\Pi$ to be QNSC \cite{Duan2014}:
\begin{align*}
\Omega_{\cA_i'\cA_o\cB_i'\cB_o}\geq 0, \ \tr_{\cA_o\cB_o}{\Omega_{\cA_i'\cA_o\cB_i'\cB_o}}=\1_{\cA_i'\cB_i'},&\\
\tr_{\cA_o\cA_i'}{\Omega_{\cA_i'\cA_o\cB_i'\cB_o}  X^{T}_{\cA_i'}}=0, \forall \tr{X}=0,&\\
\tr_{\cB_o\cB_i'}{\Omega_{\cA_i'\cA_o\cB_i'\cB_o}  Y^{T}_{\cB_i'}}=0, \forall \tr{Y}=0.&
\end{align*}
The new map $\cM^{A_i\rightarrow B_o}=\Pi^{A_i\ox B_i\rightarrow A_o\ox B_o}\circ \cE^{A_o\rightarrow B_i}$ by composing $\cN$ and $\Pi$
can be constructed as illustrated  in Figure \ref{fig:QNSC}. 
\begin{figure}[ht]
\centering
 \includegraphics[width=5.1cm]{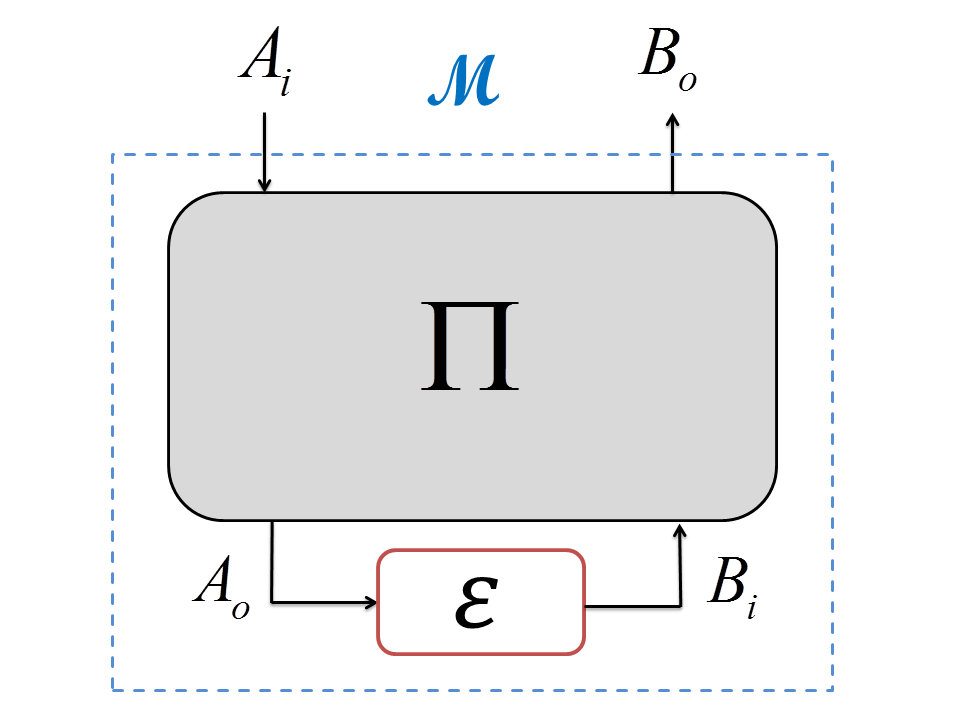}
  \caption{Implementing a channel $\cM$ using another channel $\cE$ with QNSC $\Pi$ between Alice and Bob. }
\label{fig:QNSC}
\end{figure}
The simulation cost  problem concerns how much zero-error communication is required to simulate a noisy channel exactly. Particularly,
the \emph{one-shot} zero-error classical simulation cost of $\cN$ assisted by $\Pi$
is the least noiseless symbols $m$ from $A_o$ to $B_i$ so that $\cM$ can simulate $\cN$.
 In \cite{Duan2014}, the one-shot simulation cost of a quantum channel $\cN$ is given by
\begin{equation}
 \S(\cN)= \min \tr T_B,\ {\rm s.t. }\ J_{AB}\leq \1_A\ox T_B.
\end{equation}
Its dual SDP is
$$\S(\cN) = \max  \tr({J_{AB}}{U_{AB}}) ,\ {\rm s.t.  }\ {{U_{AB}} \ge 0,\tr_A {U_{AB}} = {\1_B}},$$
where $J_{AB}$ is the Choi-Jamio\l{}kowski matrix of $\cN$. By strong duality, the values of both the primal and the dual SDP coincide.
The so-called ``non-commutative graph theory'' was first suggested in \cite{DSW2010} as the non-commutative graph associated with the channel captures the zero-error communication properties, thus playing a similar role to
confusability graph. Let $\cN(\rho)=\sum_k E_k\rho E_k^\dag$ be a quantum channel from $\cL(A)$
to $\cL(B)$, where $\sum_k E_k^\dag E_k=\1_{A}$ and $K=K(\cN)=\operatorname{span}\{E_k\}$ denotes the Choi-Kraus operator space of $\cN$.  The zero-error classical capacity of a quantum channel in the presence of quantum feedback only depends on the Choi-Kraus operator space of the
channel \cite{DSW2015}. That is to say, the Choi-Kraus operator space plays a
role that is quite similar to the bipartite graph. Such Choi-Kraus operator space $K$ is alternatively called ``non-commutative bipartite graph'' since it is clear that any classical channel induces a bipartite graph and a confusability
graph, while a quantum channel induces a non-commutative bipartite graph
together with a non-commutative graph \cite{Duan2014}.

Back to the simulation cost problem, since there might be more than one channel with Choi-Kraus operator space
included in $K$, the exact simulation cost of the ``cheapest'' one among these channels was defined as the one-shot zero-error classical simulation cost of $K$ \cite{Duan2014}: 
$\S(K)=\min \{\S(\cN): \cN \text{ is quantum channel and } K(\cN)<K\}$,
where $K(\cN)<K$ means that $K(\cN)$ is a subspace of $K$.
Then the one-shot zero-error classical simulation cost of a non-commutative bipartite graph $K$ is given by \cite{Duan2014}
\begin{equation}\begin{split}
  \label{eq:Sigma}
  \S(K) &= \min \tr T_B \ \text{ s.t. }\  0 \leq V_{AB} \leq \1_A \ox T_B, \\
        &\phantom{= \min \tr T_B \text{ s.t. }} \tr_B V_{AB} = \1_A, \\
        &\phantom{= \min \tr T_B \text{ s.t. }} \tr (\1-P)_{AB}V_{AB} = 0.
\end{split}\end{equation}
Its dual SDP is
\begin{equation}\begin{split}\label{maxSDP-K}
  \S(K) &= \max \tr S_A \ \text{ s.t. }\  0 \leq U_{AB},\ \tr_A U_{AB} = \1_B, \\
        &\phantom{= \max \tr S_A \text{ }} P_{AB}(S_A \ox \1_B - U_{AB})P_{AB} \leq 0,
\end{split}\end{equation}
where $P_{AB}$ denotes the projection onto the support of the Choi-Jamio\l{}kowski matrix of $\cN$.
Then by strong duality, values of both the primal and the dual SDP coincide. It is evident that $\S(K)$ is sub-multiplicative, which means that for two non-commutative bipartite graphs $K_1$ and $K_2$, $\S({K_1} \otimes {K_2}) \le \S({K_1})\S({K_2})$. Furthermore, the multiplicativity of $\S(K)$ for classical-quantum (cq) graphs as well as extremal graphs were known but the general case was left as an open problem \cite{Duan2014}. By the regularization, the no-signalling assisted zero-error simulation cost is
$$S_{0,NS}(K)  = \inf_{n\geq 1} \frac1n \log \S\left(K^{\ox n}\right).$$

As noted in previous work \cite{Duan2014, DSW2015},
$$C_{0,NS}(K)\le C_{\text{minE}}(K)\le S_{0,NS}(K),$$
where $C_{0,NS}(K)$ is the QSNC assisted classical zero-error capacity and $C_{\text{minE}}(K)$ is the minimum of the entanglement-assisted classical capacity \cite{BSST2003, Bennett1999} of quantum
channels $\cN$ such that $K(\cN) < K$.

Semidefinite programs \cite{SDP} can be solved in polynomial time in the program description \cite{Khachiyan1980} and there exist several different algorithms employing interior point methods  which can compute the optimum value of semidefinite programs efficiently \cite{Alizadeh1995, DeKlerk2002}. The CVX software package \cite{CVX} for MATLAB allows one to solve semidefinite programs efficiently.

In this paper, we focus on the multiplicativity of $\S(K)$ for general non-commutative bipartite graph $K$. We start from the simulation cost of two different graphs and give a sufficient condition which contains all the  known multiplicative cases such as cq graphs and extremal graphs. Then we consider about the simulation cost $\S(K)$ when the ``cheapest'' subspace is full-rank and prove the multiplicativity of one-shot simulation cost in this case. We further explicitly construct a special class of non-commutative bipartite graphs $K_\alpha$  whose one-shot simulation cost is non-multiplicative.  We also exploit some more properties of $K_\alpha$ as well as cheapest-low-rank graphs. Finally, we exhibit a lower bound in order to offer an estimation of the asymptotic simulation cost.

\section{Main results}
\subsection{A sufficient condition of the multiplicativity of simulation cost}
\begin{theorem}\label{sufficient}
Let ${K_1}$ and  ${K_2}$ be non-commutative bipartite graphs
of two quantum channels $\cN_1:\cL(A_1) \to \cL(B_1)$ and ${\cN_2}:\cL(A_2) \to \cL(B_2)$ with
 support projections $P_{{A_1}{B_1}}$ and $P_{{A_2}{B_2}}$, respectively.  Suppose the optimal solutions of SDP(\ref{maxSDP-K}) for $\S({K_1})$ and $\S(K_2)$ are $\left\{ {{S_{{A_1}}},{U_1}} \right\}$ and $\left\{ {{S_{{A_2}}},{U_2}} \right\}$.
If at least one of $S_{{A_1}}$ and $S_{{A_2}}$ satisfy
\begin{equation}\label{sufficient condition}
{P_{{A_i}{B_i}}}({S_{{A_i}}} \otimes {\1_{{B_i}}}){P_{{A_i}{B_i}}} \ge 0, i=1~\text{or}~2,
\end{equation}
then
$$\S({K_1} \otimes {K_2}) = \S({K_1})  \S({K_2}).$$
Furthermore, $$S_{0,NS}(K_1 \ox K_2)=S_{0,NS}(K_1) + S_{0,NS}(K_2).$$
\end{theorem}
\begin{proof}
It is obvious that ${U_1} \otimes {U_2} \ge 0 \text{ and } \tr_{{A_1}{A_2}}({U_1} \otimes {U_2}) = {\1_{{B_1}{B_2}}}$. For convenience, let $P_{{A_1}{B_1}}=P_1$ and $P_{{A_2}{B_2}}=P_2$.
Without loss of generality, we assume that ${P_2}({S_{{A_2}}} \otimes {\1_{{B_2}}}){P_2} \ge 0$.
From the last constraint of SDP(2), we have that
${P_1}\left( {{S_{{A_1}}} \otimes {\1_{{B_1}}}} \right){P_1} \le {P_1}{U_1}{P_1}$
and
${P_2}({S_{{A_2}}} \otimes {\1_{{B_2}}}){P_2} \le {P_2}{U_2}{P_2}$.
Note that
${P_1}\left( {{S_{{A_1}}} \otimes {\1_{{B_1}}}} \right){P_1} \otimes {P_2}({S_{{A_2}}} \otimes {\1_{{B_2}}}){P_2} \le {P_1}{U_1}{P_1} \otimes {P_2}({S_{{A_2}}} \otimes {\1_{{B_2}}}){P_2}$.
It is easy to see that
\begin{equation}
\begin{split}
&{P_1} \otimes {P_2}\left( {{S_{{A_1}}} \otimes {S_{{A_2}}} \otimes {\1_{{B_1}{B_2}}} - {U_1} \otimes {U_2}} \right){P_1} \otimes {P_2}\\
 \le &{P_1}{U_1}{P_1} \otimes [{P_2}({S_{{A_2}}} \otimes {\1_{{B_2}}}){P_2} - {P_2}{U_2}{P_2}] \le 0.
 \end{split}
 \end{equation}
 
Hence, $\left\{ {{S_{{A_1}}} \otimes {S_{{A_2}}}, {U_1} \otimes {U_2}} \right\}$ is a feasible solution of SDP(\ref{maxSDP-K}) for $\S({K_1} \otimes {K_2})$, which means that
$\S({K_1} \otimes {K_2}) \ge \S({K_1})\S({K_2})$.
Since $\S(K)$ is sub-multiplicative, we can conclude that
$\S({K_1} \otimes {K_2}) = \S({K_1})\S({K_2})$.

Furthermore, for $K_2^{\ox n}$, it is easy to see that $\{S_{A_2}^{\ox n}, U_2^{\ox n} \}$ is a feasible solution of SDP(\ref{maxSDP-K}) for $\S(K_2^{\ox n})$ and ${P_2^{\ox n}}({S_{{A_2}}^{\ox n}} \otimes {\1_{{B_2}}}^{\ox n}){P_2^{\ox n}} \ge 0$. Therefore, $\S(K_2^{\ox n})=\S(K_2)^n$ and 
$$\S[(K_1 \ox K_2 )^{\ox n}]  =\S(K_1^{\ox n} \ox K_2^{\ox n} ) =\S(K_1^{\ox n})\S(K_2^{\ox n}).$$
Hence, 
\begin{align*}
S_{0,NS}(K_1 \ox K_2)&= \inf_{n\geq 1} \frac1n \log \S\left(K_1^{\ox n} \ox K_2^{\ox n}\right)\\
&=\inf_{n\geq 1} \frac1n \log\S(K_1^{\ox n})\S(K_2^{\ox n})\\
&=S_{0,NS}(K_1) + S_{0,NS}(K_2).
\end{align*}
\end{proof}

In \cite{DW2015}, the activated zero-error no-signalling assisted capacity has been studied. Here, we consider about the corresponding simulation cost problem. 
\begin{corollary}
For any non-commutative bipartite graph K, let $\Delta_\ell=\sum_{k=1}^\ell \ketbra{kk}{kk}$ be the non-commutative bipartite graph of a noiseless channel with $\ell$ symbols, then $${\S(K \ox \Delta_\ell)}={\ell}\S(K),$$
which means that noiseless channel cannot reduce the simulation cost of any other  non-commutative bipartite graph.
\end{corollary}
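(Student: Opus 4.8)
The plan is to deduce the identity directly from Theorem~\ref{sufficient} by producing an optimal solution of SDP(\ref{maxSDP-K}) for $\S(\Delta_\ell)$ that fulfils the sufficient condition (\ref{sufficient condition}). First I would note that $\Delta_\ell$ is simultaneously the Choi--Jamio\l{}kowski matrix and the support projection $P_{A_2B_2}$ of the noiseless channel on $\ell$ symbols, whose input and output systems $A_2,B_2$ are $\ell$-dimensional.

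Second, I would pin down $\S(\Delta_\ell)=\ell$ from the two SDPs. The primal SDP(\ref{eq:Sigma}) is satisfied by $V_{A_2B_2}=\Delta_\ell$ and $T_{B_2}=\1_{B_2}$: the three constraints hold because $\Delta_\ell$ is a projection supported on $P_{A_2B_2}$, $\1_{A_2}\ox\1_{B_2}-\Delta_\ell\ge 0$, and $\tr_{B_2}\Delta_\ell=\1_{A_2}$; this gives $\S(\Delta_\ell)\le\tr\1_{B_2}=\ell$. For a matching lower bound, take $S_{A_2}=\1_{A_2}$ and $U_{A_2B_2}=\Delta_\ell$ in SDP(\ref{maxSDP-K}); then $U_{A_2B_2}\ge 0$, $\tr_{A_2}U_{A_2B_2}=\1_{B_2}$, and, using $\Delta_\ell(\1_{A_2}\ox\1_{B_2})\Delta_\ell=\Delta_\ell=\Delta_\ell^{3}$, the last constraint holds with equality, $P_{A_2B_2}(S_{A_2}\ox\1_{B_2}-U_{A_2B_2})P_{A_2B_2}=0$. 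This feasible point has value $\tr\1_{A_2}=\ell$, so $\S(\Delta_\ell)=\ell$ and $\{\1_{A_2},\Delta_\ell\}$ is an optimal solution.

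Third, I would check the sufficient condition for this solution: $P_{A_2B_2}(S_{A_2}\ox\1_{B_2})P_{A_2B_2}=\Delta_\ell\,\Delta_\ell=\Delta_\ell\ge 0$. Invoking Theorem~\ref{sufficient} with $K_1=K$ (together with any of its optimal solutions) and $K_2=\Delta_\ell$ then gives $\S(K\ox\Delta_\ell)=\S(K)\,\S(\Delta_\ell)=\ell\,\S(K)$, as claimed.

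There is essentially no hard step; the argument is a short verification, and the only point requiring care is confirming that $\{\1_{A_2},\Delta_\ell\}$ is genuinely optimal rather than merely feasible — which is why the explicit primal bound $\S(\Delta_\ell)\le\ell$ must be established before Theorem~\ref{sufficient} can be applied. Alternatively, one could bypass the theorem and repeat its tensoring argument directly with $U_2=\Delta_\ell$ and $\Delta_\ell^{\ox n}=\Delta_{\ell^{n}}$, but routing through Theorem~\ref{sufficient} is cleaner.
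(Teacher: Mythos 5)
Your proof is correct and follows the same route as the paper: both invoke Theorem~\ref{sufficient} with $K_2=\Delta_\ell$. The paper merely asserts that ``$\Delta_\ell$ satisfies the condition'' without exhibiting the optimal dual pair, whereas you explicitly produce $\{\1_{A_2},\Delta_\ell\}$, verify its optimality via the matching primal bound, and check that $P_{A_2B_2}(\1_{A_2}\ox\1_{B_2})P_{A_2B_2}=\Delta_\ell\ge 0$ — a fuller version of the same argument.
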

\begin{proof}
It is evident that $\Delta_\ell$ satisfies the condition in Theorem \ref{sufficient}. Then, ${\S(K \ox \Delta_\ell)}=\ell\S(K)$.
\end{proof}

\subsection{Simulation cost of the cheapest-full-rank non-commutative bipartite graph}
\begin{definition}
Given a non-commutative bipartite graph $K$ with support projection $P_{AB}$. Assume the ``cheapest channel'' in this space is $\cN_c$ with Choi-Jamio\l{}kowski matrix $J_{\cN_c}$. $K$ is said to be \textbf{cheapest-full-rank} if there exists $\cN_c$ such that $rank(J_{\cN_c})=rank(P_{AB})$.  Otherwise, $K$ is said to be \textbf{cheapest-low-rank}.
\end{definition}

\begin{lemma}\label{lemma of PAP}
For a quantum channel $\cN$ with Choi-Jamio\l{}kowski matrix $J_{AB}$ and support projection $P_{AB}$,
if $P_{AB}CP_{AB}=P_{AB}DP_{AB}$,
then
$\tr (CJ_{AB})=\tr (DJ_{AB})$.
\end{lemma}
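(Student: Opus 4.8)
The plan is to use the defining property of a support projection: since $P_{AB}$ projects onto the support of $J_{AB}$, we have $P_{AB}J_{AB} = J_{AB}P_{AB} = J_{AB}$. This is the key structural fact that lets the hypothesis $P_{AB}CP_{AB} = P_{AB}DP_{AB}$ propagate into the trace expression.

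Concretely, I would argue as follows. Write $\tr(CJ_{AB}) = \tr(C P_{AB} J_{AB} P_{AB})$, using $J_{AB} = P_{AB}J_{AB}P_{AB}$ (which holds because $J_{AB}$ is supported on the range of $P_{AB}$ and $P_{AB}$ is an orthogonal projection). By cyclicity of the trace, $\tr(C P_{AB} J_{AB} P_{AB}) = \tr(P_{AB} C P_{AB} \cdot J_{AB})$. Now apply the hypothesis $P_{AB}CP_{AB} = P_{AB}DP_{AB}$ to replace the middle factor, obtaining $\tr(P_{AB} D P_{AB} \cdot J_{AB})$. Running the same chain of equalities backwards gives $\tr(P_{AB} D P_{AB} \cdot J_{AB}) = \tr(D P_{AB} J_{AB} P_{AB}) = \tr(D J_{AB})$, which is the desired conclusion.

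There is no real obstacle here; the only point requiring a moment of care is justifying $J_{AB} = P_{AB}J_{AB}P_{AB}$. This follows because $J_{AB} \ge 0$ and $P_{AB}$ is by definition the projection onto its support (range), so $J_{AB}$ and $P_{AB}$ commute and $P_{AB}$ acts as the identity on the range of $J_{AB}$; hence $P_{AB}J_{AB} = J_{AB}P_{AB} = J_{AB}$ and in particular $P_{AB}J_{AB}P_{AB} = J_{AB}$. Everything else is cyclicity of the trace. The statement holds for arbitrary operators $C, D$ (not necessarily Hermitian), since we never needed positivity or self-adjointness of $C$ or $D$ — only of $J_{AB}$ — and this generality is what makes the lemma convenient for later applications where $C$ and $D$ arise as the relevant operands inside trace expressions coming from the SDPs.
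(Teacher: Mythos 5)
Your proof is correct and takes essentially the same route as the paper: both insert $J_{AB} = P_{AB}J_{AB}P_{AB}$, use cyclicity of the trace to bring the two projections around $C$, apply the hypothesis, and reverse the steps. The paper treats $J_{AB} = P_{AB}J_{AB}P_{AB}$ as immediate, whereas you spell out the justification, but the argument is the same.
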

\begin{proof}
It is easy to see that 
\begin{align*}
\tr (CJ_{AB}) &=\tr(CP_{AB}J_{AB}P_{AB})=\tr(P_{AB}CP_{AB}J_{AB})\\
&=\tr(P_{AB}DP_{AB}J_{AB}) =\tr (DJ_{AB}).
\end{align*}
\end{proof}

\begin{proposition}\label{lemma JW}
For any non-commutative bipartite graph K with support projection $P_{AB}$, suppose that the cheapest channel is $\cN_c$ and the optimal solution of SDP (\ref{maxSDP-K}) is $\{S_A, U_{AB}\}$. Assume that
\begin{equation}
P_{AB}(S_A \ox \1_B - U_{AB})P_{AB} = -W_{AB} \text{ and } W_{AB} \ge 0.
\end{equation}
Then, we have that
\begin{equation}\label{lemma2}
    \tr W_{AB}J_{AB}=0,
\end{equation} and  $U_{AB}$  is also the optimal solution of $\S(\cN_c)$, 
 where  $J_{AB}$ is the Choi-Jamio\l{}kowski matrix of  $\cN_c$.
\end{proposition}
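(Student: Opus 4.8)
The plan is to run a complementary-slackness argument linking two semidefinite programs: the dual SDP (\ref{maxSDP-K}) for $\S(K)$, for which $\{S_A,U_{AB}\}$ is optimal, and the dual SDP for $\S(\cN_c)$, namely $\max\tr(J_{AB}U)$ subject to $U\ge 0$ and $\tr_A U=\1_B$. The bridge between them is that $\cN_c$ is by definition a cheapest channel with $K(\cN_c)<K$, so $\S(\cN_c)=\S(K)=\tr S_A$, the last equality by optimality of $\{S_A,U_{AB}\}$ in (\ref{maxSDP-K}).

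First I would record three facts. (i) Since $K(\cN_c)<K$, the Choi-Jamio\l{}kowski matrix $J_{AB}$ of $\cN_c$ is supported inside the range of $P_{AB}$, i.e.\ $J_{AB}=P_{AB}J_{AB}P_{AB}$. (ii) Since $\cN_c$ is trace preserving, $\tr_B J_{AB}=\1_A$. (iii) Because $U_{AB}\ge 0$ and $\tr_A U_{AB}=\1_B$, the matrix $U_{AB}$ is feasible for the dual SDP of $\S(\cN_c)$, so $\tr(U_{AB}J_{AB})\le\S(\cN_c)=\tr S_A$. The heart of the proof is then to evaluate $\tr(W_{AB}J_{AB})$. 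Writing $W_{AB}=P_{AB}U_{AB}P_{AB}-P_{AB}(S_A\ox\1_B)P_{AB}$, using fact (i), and applying Lemma \ref{lemma of PAP} twice (once with $C=U_{AB}$ and once with $C=S_A\ox\1_B$, each time against $D=P_{AB}CP_{AB}$), I obtain
\begin{align*}
\tr(W_{AB}J_{AB}) &= \tr(U_{AB}J_{AB})-\tr\big((S_A\ox\1_B)J_{AB}\big)\\
 &= \tr(U_{AB}J_{AB})-\tr\big(S_A\,\tr_B J_{AB}\big)=\tr(U_{AB}J_{AB})-\tr S_A,
\end{align*}
where the last step uses (ii). By (iii) the right-hand side is $\le 0$, while $W_{AB}\ge 0$ and $J_{AB}\ge 0$ give $\tr(W_{AB}J_{AB})\ge 0$; hence $\tr(W_{AB}J_{AB})=0$, which is precisely (\ref{lemma2}).

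Finally, $\tr(W_{AB}J_{AB})=0$ rearranges to $\tr(U_{AB}J_{AB})=\tr S_A=\S(\cN_c)$, so the feasible point $U_{AB}$ attains the optimum of the dual SDP of $\S(\cN_c)$ and is therefore an optimal solution for it. The only points I expect to need care are the two ``definitional'' inputs — the identity $\S(K)=\S(\cN_c)$ and the support inclusion $J_{AB}=P_{AB}J_{AB}P_{AB}$ — both of which follow at once from the definition of the cheapest channel and of the non-commutative bipartite graph associated with $K$; once these are in hand the argument collapses to the sign squeeze $0\le\tr(W_{AB}J_{AB})\le 0$ above, so no genuine obstacle remains.
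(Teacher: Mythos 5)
Your argument is correct and is essentially the paper's own proof: both use $\S(K)=\S(\cN_c)=\tr S_A$, the feasibility of $U_{AB}$ in the dual SDP for $\S(\cN_c)$ to get $\tr(U_{AB}J_{AB})\le\tr S_A$, Lemma~\ref{lemma of PAP} together with $J_{AB}=P_{AB}J_{AB}P_{AB}$ to show $\tr(W_{AB}J_{AB})=\tr(U_{AB}J_{AB})-\tr S_A$, and finally the positivity squeeze $0\le\tr(W_{AB}J_{AB})\le 0$. The only cosmetic difference is that you unpack $W_{AB}$ into two applications of Lemma~\ref{lemma of PAP} rather than one, but the content is identical.
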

\begin{proof}
On one hand, since $\cN_c$ is the cheapest channel, $\S(K)$ will equal to $\S(\cN_c)$, also noting that $\{S_A, U_{AB}\}$ is the optimal solution, we have
\begin{equation}\begin{split}\label{eq1-lemma2}
 \tr{S_A} &=\S(K) =\S(\cN_c) \\
  &= \max  \tr {J_{AB}}{V_{AB}} ,\ {\rm s.t.  }\ {{V_{AB}} \ge 0,\tr_A {V_{AB}} = {\1_B}},\\
   &\ge  \tr {J_{AB}}{U_{AB}}.
\end{split}\end{equation}

On the other hand, it is evident that $W_{AB}= P_{AB}WP_{AB} $, then
$P_{AB}U_{AB}P _{AB} = P _{AB} (W_{AB}+ S_A \ox \1_B)P _{AB}$.
From Lemma \ref{lemma of PAP}, we can conclude that
$\tr U_{AB}J_{AB} = \tr (W_{AB}+ S_A \ox \1_B)J_{AB}=\tr W_{AB}J_{AB}+\tr(S_A \ox \1_B)J_{AB}$.

For Choi-Jamio\l{}kowski matrix $J_{AB}$, we have that
\begin{equation}\begin{split}
\tr({S_A} \otimes {\1_B}){J_{AB}} &= \tr_A\tr_B[({S_A} \otimes {\1_B}){J_{AB}}] \\
&=\tr_A[{S_A}(\tr_B{J_{AB}}) ]= \tr{S_A},
\end{split}\end{equation}
then
\begin{equation}\label{eq2-lemma2}
\tr U_{AB}J_{AB}= \tr W_{AB}J_{AB}+\tr{S_A}.
\end{equation}
Combining (\ref {eq1-lemma2}) and (\ref{eq2-lemma2}), and noting that $W_{AB}, J_{AB} \ge 0$, we can conclude that
$\tr W_{AB}J_{AB}=0$ and  $U_{AB}$  is also the optimal solution of  $\S(\cN_c)$.
\end{proof}

\begin{theorem}\label{full-rank-S}
For any cheapest-full-rank non-commutative bipartite graph $K$, we have
\begin{equation}\begin{split}
  \label{eq:Sigma-dual-hat}
\S(K) &= \max \tr S_A \ \text{ s.t. }\  0 \leq U_{AB},\ \tr_A U_{AB} = \1_B, \\
        &\phantom{= \max \tr S_A \text{ s.t. }} P_{AB}(S_A \ox \1_B - U_{AB})P_{AB} = 0.
\end{split}\end{equation}
Also,
$\S({K} \otimes {K}) = \S({K})  \S({K})$. Consequently, $S_{0,NS}(K)= \log \S(K)$. 

And for any other non-commutative bipartite graph $K'$,
$S_{0,NS}(K\ox K')=S_{0,\NS}(K)+S_{0,NS}(K')$.
\end{theorem}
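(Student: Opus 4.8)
The plan is to deduce everything from Proposition~\ref{lemma JW} and Theorem~\ref{sufficient}, the point being that the cheapest-full-rank hypothesis upgrades the inequality in SDP(\ref{maxSDP-K}) to an equality. First I would fix an optimal solution $\{S_A, U_{AB}\}$ of SDP(\ref{maxSDP-K}) for $\S(K)$ and set $W_{AB} := -P_{AB}(S_A \ox \1_B - U_{AB})P_{AB}$, which is positive semidefinite by the last constraint of (\ref{maxSDP-K}) and satisfies $W_{AB} = P_{AB}W_{AB}P_{AB}$. Since $K$ is cheapest-full-rank, I may choose a cheapest channel $\cN_c$ whose Choi-Jamio\l{}kowski matrix $J_{AB}$ has $\operatorname{rank}(J_{AB}) = \operatorname{rank}(P_{AB})$; as $J_{AB}$ is supported inside $\operatorname{range}(P_{AB})$, equality of ranks forces $\operatorname{supp}(J_{AB}) = \operatorname{range}(P_{AB})$, so $J_{AB}$ is strictly positive on $\operatorname{range}(P_{AB})$. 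Proposition~\ref{lemma JW} gives $\tr(W_{AB}J_{AB}) = 0$; since $W_{AB} \ge 0$ is supported inside $\operatorname{range}(P_{AB})$ and $J_{AB} > 0$ there, this forces $W_{AB} = 0$, that is, $P_{AB}(S_A \ox \1_B - U_{AB})P_{AB} = 0$. Thus the optimal solution is feasible for the equality-constrained program (\ref{eq:Sigma-dual-hat}); since (\ref{eq:Sigma-dual-hat}) has a smaller feasible set than (\ref{maxSDP-K}), the two programs share the same optimal value, which proves the first assertion.

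Multiplicativity then follows at once: from $P_{AB}(S_A \ox \1_B)P_{AB} = P_{AB}U_{AB}P_{AB}$ and $U_{AB} \ge 0$ we get $P_{AB}(S_A \ox \1_B)P_{AB} \ge 0$, so the optimal solution of $\S(K)$ satisfies condition (\ref{sufficient condition}) of Theorem~\ref{sufficient}. Taking $K_1 = K_2 = K$ in Theorem~\ref{sufficient} gives $\S(K \ox K) = \S(K)\S(K)$, and taking $K_1 = K$ and $K_2 = K'$ arbitrary in the ``furthermore'' part of Theorem~\ref{sufficient} gives $S_{0,NS}(K \ox K') = S_{0,NS}(K) + S_{0,NS}(K')$, the last identity of the theorem.

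For $S_{0,NS}(K) = \log \S(K)$ I would show $\S(K^{\ox n}) = \S(K)^n$ for all $n \ge 1$. The inequality ``$\le$'' is sub-multiplicativity of $\S$. For ``$\ge$'', I would verify that $\{S_A^{\ox n}, U_{AB}^{\ox n}\}$ is feasible for SDP(\ref{maxSDP-K}) for $\S(K^{\ox n})$: positivity and the partial-trace constraint tensorize, and, after the standard reordering of tensor factors under which the support projection of $K^{\ox n}$ is $P_{AB}^{\ox n}$, the last constraint becomes $(P_{AB}(S_A \ox \1_B)P_{AB})^{\ox n} - (P_{AB}U_{AB}P_{AB})^{\ox n} = 0$ by the equality established above. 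Hence $\S(K^{\ox n}) \ge (\tr S_A)^n = \S(K)^n$, and therefore $S_{0,NS}(K) = \inf_{n \ge 1}\tfrac1n\log\S(K)^n = \log\S(K)$.

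I do not expect a serious obstacle. The steps needing a little care are the two support arguments --- that the cheapest-full-rank condition makes $\operatorname{supp}(J_{AB})$ equal to all of $\operatorname{range}(P_{AB})$, and that a positive semidefinite $W$ with $\tr(WJ) = 0$ and $J > 0$ on $\operatorname{supp}(W)$ must vanish --- together with the bookkeeping that the support projection of $K^{\ox n}$ is $P_{AB}^{\ox n}$ after reordering. The single genuine idea is to observe that the full-rank hypothesis turns the inequality in the dual SDP into an equality, after which Proposition~\ref{lemma JW} and Theorem~\ref{sufficient} carry the argument.
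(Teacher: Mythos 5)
Your proof is correct and follows essentially the same route as the paper: invoke Proposition~\ref{lemma JW} to get $\tr(W_{AB}J_{AB})=0$, use the full-rank hypothesis to force $W_{AB}=0$, observe that the resulting equality $P_{AB}(S_A\ox\1_B)P_{AB}=P_{AB}U_{AB}P_{AB}\ge 0$ is exactly condition~(\ref{sufficient condition}), and then apply Theorem~\ref{sufficient}. You merely spell out details that the paper leaves terse --- in particular the support argument showing that $\tr(W_{AB}J_{AB})=0$ with $J_{AB}$ positive definite on $\operatorname{range}(P_{AB})$ forces $W_{AB}=0$, the explicit check of the hypothesis of Theorem~\ref{sufficient}, and the tensorization giving $\S(K^{\ox n})=\S(K)^n$ --- all of which are implicit in, and consistent with, the paper's argument.
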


\begin{proof}
We first assume that $W \ne 0$.
Notice $rank(J_{AB})=rank(P_{AB})$, it is easy to see that $\tr WJ_{AB}>0$,
which contradicts Eq. (\ref{lemma2}).
Hence the assumption is false, and we  can conclude that $P_{AB}(S_A \ox \1_B - U_{AB})P_{AB} = 0$.

Then by Theorem \ref{sufficient}, it is easy to see that $\S(K \ox K) = \S(K)\S(K)$.
Therefore,
$$S_{0,\NS}(K)    = \inf_{n\geq 1} \frac1n \log \S(K^{\ox n})= \log \S(K).$$
Furthermore, for any other  non-commutative bipartite graph $K'$, $S_{0,NS}(K\ox K')= S_{0,\NS}(K)+S_{0,NS}(K')$.
\end{proof}

Noting that any rank-2 Choi-Kraus operator space is always cheapest-full-rank, we have the following immediate corollary.
\begin{corollary}
For any rank-2 Choi-Kraus operator space $K$,
$S_{0,NS}(K)= \log \S(K)$.
And for any other non-commutative bipartite graph $K'$,
$S_{0,NS}(K\ox K')= S_{0,NS}(K)+S_{0,NS}(K')$.
\end{corollary}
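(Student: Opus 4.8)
The plan is to derive this corollary from Theorem~\ref{full-rank-S}: once we know that every rank-$2$ non-commutative bipartite graph $K$ is cheapest-full-rank, Theorem~\ref{full-rank-S} immediately gives $S_{0,NS}(K)=\log\S(K)$ together with $S_{0,NS}(K\ox K')=S_{0,NS}(K)+S_{0,NS}(K')$ for any $K'$, and there is nothing more to check. So the entire content is the parenthetical assertion that a rank-$2$ Choi-Kraus operator space is always cheapest-full-rank, which I would establish as follows.

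First I would record a dimension count. The vectorization $E\mapsto(\1_A\ox E)\ket{\Phi_{AA'}}$ is a linear isomorphism from $K$ onto the range of $P_{AB}$, so $\operatorname{rank}(P_{AB})=\dim K=2$; consequently every channel $\cN$ with $K(\cN)<K$ has $\operatorname{rank}(J_\cN)\le 2$. Let $\cN_c$ be a cheapest channel (one exists because $\S(\cdot)$ is continuous on the compact set of channels whose Choi matrix is supported in the range of $P_{AB}$). Then $\operatorname{rank}(J_{\cN_c})\in\{1,2\}$, so it suffices to handle the case $\operatorname{rank}(J_{\cN_c})=1$.

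The key ingredient is the universal bound $\S(\cN)\le d_A^2$ for every channel $\cN\colon\cL(A)\to\cL(B)$, where $d_A=\dim A$. Indeed, applying the positive map $\operatorname{id}_A\ox\cN$ to the operator inequality $\ketbra{\Phi_{AA'}}{\Phi_{AA'}}\le d_A\,\1_{AA'}$ yields $J_{AB}\le d_A\,\1_A\ox\cN(\1_{A'})$, so $T_B:=d_A\,\cN(\1_{A'})$ is primal-feasible for $\S(\cN)$ with $\tr T_B=d_A^2$. On the other hand, if $\operatorname{rank}(J_{\cN_c})=1$ then $J_{\cN_c}$ is rank-one with $\tr_B J_{\cN_c}=\1_A$, i.e.\ maximally entangled, so $\cN_c$ is an isometric channel, and a direct evaluation of the SDP~(\ref{maxSDP-K}) --- the lower bound by testing $\1_A\ox T_B\ge J_{\cN_c}$ against the maximally entangled vector, the upper bound by taking $T_B=d_A\Pi$ with $\Pi$ the projection onto the range of the isometry --- gives $\S(\cN_c)=d_A^2$ exactly.

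Now combine the two facts. If $\operatorname{rank}(J_{\cN_c})=2=\operatorname{rank}(P_{AB})$ we are done. If $\operatorname{rank}(J_{\cN_c})=1$ then $\S(K)=\S(\cN_c)=d_A^2$, and any channel $\cN_0$ with $K(\cN_0)=K$ satisfies $\operatorname{rank}(J_{\cN_0})=\dim K=2=\operatorname{rank}(P_{AB})$, $\S(\cN_0)\le d_A^2$ by the universal bound, and $\S(\cN_0)\ge\S(K)=d_A^2$ by the definition of $\S(K)$; hence $\S(\cN_0)=\S(K)$, so $\cN_0$ is itself a cheapest channel with full-rank Choi matrix. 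In either case $K$ is cheapest-full-rank, and Theorem~\ref{full-rank-S} finishes the proof. I expect the only genuinely non-routine point to be the universal bound $\S(\cN)\le d_A^2$ and the matching exact value $d_A^2$ for isometric channels; everything else is a dimension count plus the preceding theorem.
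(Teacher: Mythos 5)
Your proposal is correct, and it supplies a genuine argument for what the paper treats as a throwaway remark. The paper's ``proof'' of this corollary is the single sentence ``Noting that any rank-2 Choi-Kraus operator space is always cheapest-full-rank,'' followed by an appeal to Theorem~\ref{full-rank-S}; no justification for that parenthetical claim is given. You have filled the gap with a clean two-case analysis: if the cheapest channel $\cN_c$ has $\operatorname{rank}(J_{\cN_c})=2$ there is nothing to prove, and if $\operatorname{rank}(J_{\cN_c})=1$ then $\tr_B J_{\cN_c}=\1_A$ forces $\cN_c$ to be isometric, $\S(\cN_c)=d_A^2$ saturates the universal bound $\S(\cN)\le d_A^2$, and consequently the (necessarily rank-$2$) defining channel of $K$ is also cheapest. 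Both the universal bound --- from $\proj{\Phi_{AA'}}\le d_A\,\1_{AA'}$ pushed through $\id_A\ox\cN$ --- and the exact value $d_A^2$ for isometries --- lower bound from $\tr(T_B\Pi)\ge\bra{\psi}\proj{\psi}\ket{\psi}/\|\psi\|^2\cdot\|\psi\|^2=d_A^2$ since $\tr_A\proj{\psi}=\Pi$, upper bound from $T_B=d_A\Pi$ --- check out. One small stylistic point: your existence-of-a-cheapest-channel remark is needed only to set up the dichotomy, and continuity of $J\mapsto\S(J)$ follows most transparently from the dual form $\max\{\tr(JU):U\ge0,\tr_A U=\1_B\}$, which is a supremum of linear functionals of $J$ over a fixed compact set; you might make that explicit. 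Beyond that, the reduction to Theorem~\ref{full-rank-S} is exactly the route the paper intends, so in spirit this is the same proof, just with the missing lemma actually proved.
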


\subsection{The one-shot simulation cost is not multiplicative}
We will focus on the non-commutative bipartite graph $K_\alpha$  with
support projection $P_{AB}=\sum \limits_{j=0}^{2} \ketbra {\psi_j} {\psi_j}$, where
$\ket {\psi_0} =\frac{1}{\sqrt 3}(\ket {00}+\ket {01}+\ket {12}), 
\ket {\psi_1} =\cos \alpha \ket {02}+\sin \alpha \ket {11}, 
\ket {\psi_2} =\ket {10}$.

To prove that  $K_\alpha$ ($0<\cos^2 \alpha<1$) is feasible to be a class of feasible non-commutative bipartite graphs, we only need to find a channel $\cN$ with
Choi-Jamio\l{}kowski matrix $J_{AB}$ such that $P_{AB}J_{AB}=J_{AB}$ and $\text{rank}(P_{AB})=\text{rank}(J_{AB})$ .
 Assume that $J_{AB}=\sum \limits_{j=0}^{2} a_j\ketbra {\psi_j} {\psi_j}$,
then it is equivalent to prove that $\tr_B J_{AB}=\1_A$ and $J_{AB} \ge 0$ has a feasible solution. Therefore,
$$
\frac{2}{3}a_0+\cos^2\alpha a_1=1,
a_0+a_1+a_2=2,
a_0, a_1, a_2 >0.
$$
Noting that when we choose $0< a_1<\frac{1}{2}$, $a_0=\frac{3}{2}(1-\cos^2\alpha a_1)$ and $a_2=\frac{1-(2-3\cos^2\alpha )a_1}{2}$ will be positive, which means that  there exists such $J_{AB}$.
Hence, $K_\alpha$ is a feasible noncommutative bipartite graph.

\begin{theorem}
There exists non-commutative bipartite graph K such that $\S(K\ox K)<\S(K)^2$.
\end{theorem}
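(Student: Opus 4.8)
The plan is to exhibit the graph $K_\alpha$ defined just above the theorem statement as an explicit witness to non-multiplicativity, for a suitable value of $\alpha$ with $0<\cos^2\alpha<1$. First I would compute $\S(K_\alpha)$ by solving the SDP (\ref{eq:Sigma}) (equivalently its dual (\ref{maxSDP-K})): one picks a good feasible pair $\{T_B, V_{AB}\}$ for the primal and a matching feasible pair $\{S_A, U_{AB}\}$ for the dual so that the two objective values coincide, pinning down $\S(K_\alpha)$ exactly as a function of $\alpha$. Because $P_{AB}$ is the span of the three given vectors $\ket{\psi_0},\ket{\psi_1},\ket{\psi_2}$ on a $2\otimes 3$ system, all the matrices involved are small ($6\times 6$ at most), so these optimizations are finite and can in principle be done by hand or symbolically; the structure of $\ket{\psi_0}$ (which entangles the $A$-qubit with the $B$-qutrit) is what makes the constraint $P_{AB}(S_A\ox\1_B-U_{AB})P_{AB}\le 0$ bite.

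Next I would analyze $\S(K_\alpha\ox K_\alpha)$. The point is that the tensor-product support projection $P_{AB}\ox P_{AB}$ contains ``cross terms'' built from products of the $\ket{\psi_j}$'s that are not available in a single copy, and these allow a strictly cheaper feasible solution of the primal SDP (\ref{eq:Sigma}) for $\S(K_\alpha\ox K_\alpha)$ than the product solution $T_B\ox T_B$. Concretely, I would construct an explicit $V_{A_1A_2B_1B_2}$ supported on $P_{AB}\ox P_{AB}$ with $\tr_{B_1B_2}V=\1$, together with a $T_{B_1B_2}$ dominating it, whose trace is strictly below $\S(K_\alpha)^2$; combined with submultiplicativity $\S(K_\alpha\ox K_\alpha)\le\S(K_\alpha)^2$, this already proves the theorem, but to be safe I would also give a matching dual certificate for the product graph to show the gap is genuine and not an artifact of a loose bound. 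A convenient way to certify optimality on the two-copy side is again to produce a dual-feasible $\{S_{A_1A_2},U_{A_1A_2B_1B_2}\}$ for (\ref{maxSDP-K}) with the same value.

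The main obstacle I anticipate is twofold. First, one must verify that $K_\alpha$ is a legitimate non-commutative bipartite graph, i.e.\ that some genuine channel $\cN$ has Choi-Kraus space inside $P_{AB}$ with the cheapest channel actually attaining support $P_{AB}$; this is precisely the feasibility computation carried out in the paragraph preceding the theorem (choosing $0<a_1<\tfrac12$), and it must be reconciled with the value of $\alpha$ chosen for the separation. Second, and more delicate, is finding the right two-copy feasible solution: the cross terms in $P_{AB}^{\ox 2}$ that lower the cost have to be identified explicitly, and one must check both $0\le V\le\1\ox T$ and the support constraint $\tr(\1-P^{\ox 2})V=0$ simultaneously. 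I expect that a clean choice of $\alpha$ (for instance one making $\cos^2\alpha$ a simple rational such as $1/2$ or $1/3$) will make all the eigenvalue and positivity checks tractable, and the separation $\S(K_\alpha\ox K_\alpha)<\S(K_\alpha)^2$ should then follow from a direct comparison of the two objective values; establishing that the one-copy value is tight (so that the comparison is meaningful) via the dual SDP is the step that requires the most care.
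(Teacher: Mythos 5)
Your proposal takes essentially the same route as the paper: exhibit $K_\alpha$ as the witness and certify a strict gap by producing explicit SDP feasible points for one copy and two copies. One conceptual correction is worth making. You say that "establishing that the one-copy value is tight \dots is the step that requires the most care," and you propose to pin down $\S(K_\alpha)$ exactly via matching primal--dual pairs; this is more than is needed. Since (\ref{maxSDP-K}) is a maximization equal to $\S$ by strong duality, \emph{any} dual-feasible $\{S_A,U_{AB}\}$ already yields a rigorous lower bound $\S(K_\alpha)\ge\tr S_A$, while any primal-feasible $\{T,V\}$ for two copies yields an upper bound $\S(K_\alpha^{\ox2})\le\tr T$; if the latter is strictly below the square of the former, the theorem follows immediately, with no appeal to submultiplicativity (which only gives the non-strict inequality and hence cannot produce a separation by itself) and no need for tightness on either side. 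This is exactly what the paper does at $\cos^2\alpha=\tfrac14$: an explicit rank-two $U$ and diagonal $S$ give $\S(K_{\pi/3})\ge 2.5716$, and a numerically constructed primal-feasible point gives $\S(K_{\pi/3}^{\ox 2})^{1/2}\le 2.57$, closing the argument. Your plan is correct but over-engineered on the one-copy side; and be aware that the paper itself leans on CVX/QETLAB for the two-copy certificate rather than a hand-verifiable construction, so the symbolic two-copy certificate you envision would actually go beyond what the published proof provides.
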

\begin{proof}
As we have shown above, it is reasonable to focus on $K_\alpha$. 
Then, by semidefinite programming assisted with useful tools CVX \cite{CVX} and QETLAB \cite{QETLAB}, the gap between one-shot and two-shot average no-signalling assisted zero-error simulation cost of $K_\alpha (0.25 \leq \cos^2 \alpha \leq 0.35)$ is presented in  Figure \ref{fig:sc eg}.

To be specfic, when $\alpha=\pi/3$,  it is clear that $\cos^2 \alpha=1/4$ and 
$\ket {\psi_1} =\frac{1}{2} \ket {02}+\frac{\sqrt 3}{2} \ket {11}$.
Assume that $S=3.1102\proj{0}-0.5386\proj{1}$
and $U=\frac{99}{50}\proj{u_1}+\frac{51}{50}\proj{u_2}$, where $\ket {u_1}=\frac{10}{3\sqrt {33}}\ket{00}+\frac{5}{3}\sqrt\frac{2}{33}\ket {01}+\frac{7}{3\sqrt {11}}\ket{12}$ and
$\ket {u_2}=\frac{1}{\sqrt {51}}\ket{02}-\frac{5}{3}\sqrt\frac{2}{17}\ket{10}+\frac{10}{3\sqrt {17}}\ket{11}$, and it can be checked that $U\ge 0$, $\tr_A U=\1_B$ and $P_{AB}(S_A \ox \1_B - U_{AB})P_{AB} \le 0$. Then $\{S, U\}$ is a feasible solution of SDP (\ref{maxSDP-K}) for $\S(K_{\pi/3})$, which means that $\S(K_{\pi/3})\ge \tr S= 2.5716$.
Similarily, we can find a  feasible solution of SDP (\ref{eq:Sigma}) for $\S(K_{\pi/3}\ox K_{\pi/3})$ through Matlab such that $\S(K_{\pi/3}\ox K_{\pi/3})^{1/2}\le 2.57$. (The code is available at \cite{sc code}.) Hence, there is a non-vanishing gap between $\S(K_{\pi/3})$ and $\S(K_{\pi/3}\ox K_{\pi/3})^{1/2}$. 
\end{proof}

\begin{figure}[htbp]
  \centering
  \includegraphics[width=0.41\textwidth]{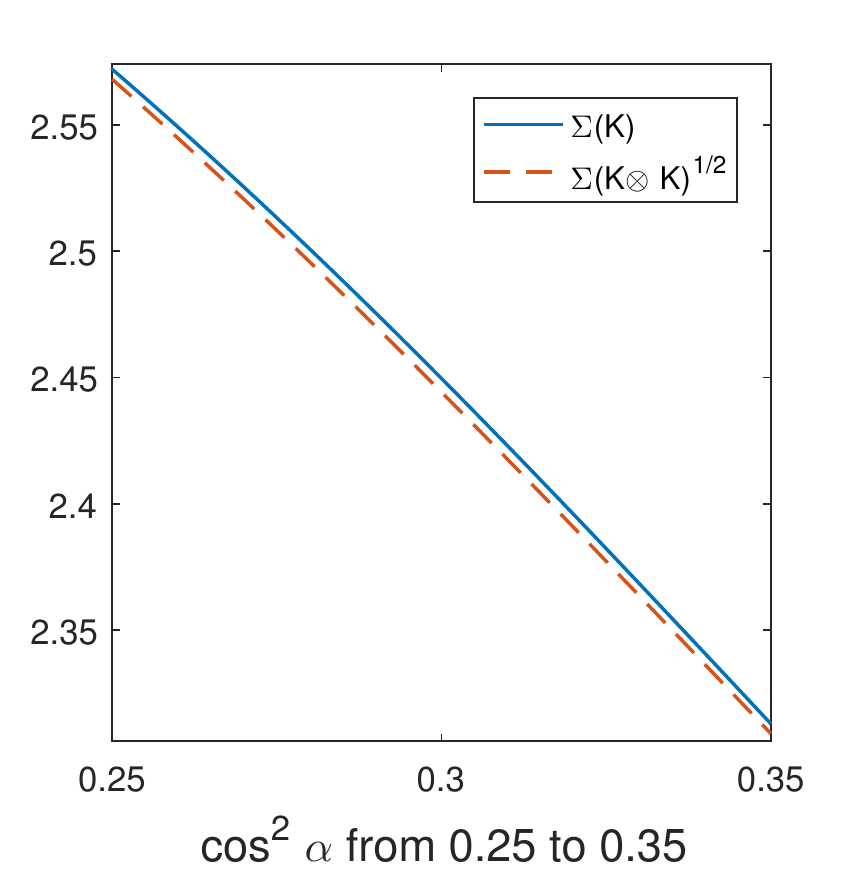}
  \caption{The one-shot (red)  and two-shot average (blue) no-signalling assisted zero-error simulation cost of $K_\alpha$ over the parameter $\alpha$.}\label{fig:sc eg}
\end{figure}

We have shown that one-shot simulation cost of cheapest-full-rank non-commutative bipartite graphs is multiplicative while there are counterexamples for cheapest-low-rank ones.
However, not all cheapest-low-rank graphs have non-multiplicative simulation cost.
Here is one trivial counterexample. Let $K=\text{span}\{\ket 0 \bra 0 ,\ket1 \bra 0 ,\ket 1 \bra 1\}$, the cheapest channel is a constant channel $\cN$ with $E_0=\ket 1 \bra 0$ and $E_1=\ket 1 \bra 1$. In this case, $\S(K\otimes K) = \S(K)\S(K)=1.$
Actually, the simulation cost problem of cheapest-low-rank non-commutative bipartite graphs is complex since it is hard to determine the cheapest subspace under tensor powers.
Therefore, it is difficult to calculate the asymptotic simulation cost of non-multiplicative cases.

In \cite{DSW2015},  $K$ is called non-trivial if there is no constant channel $\cN_0: \rho \to \proj \beta$ with $K(\cN_0) < K$, where  $\ket \beta$ is a state vector. It was known that $K$ is non-trivial if and only if the no-signalling assisted zero-error capacity is positive, say $C_{0,NS}(K)>0$.
Clearly we have the following result.
\begin{proposition}
For any non-commutative bipartite graph $K$, $S_{0,NS}(K)>0$ if and only if $K$ is non-trivial.
\end{proposition}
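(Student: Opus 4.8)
The plan is to establish both directions by connecting $S_{0,NS}(K)>0$ to the one-shot cost $\S(K)$ and then invoking the known characterization of non-triviality through the capacity $C_{0,NS}(K)$. First observe that $S_{0,NS}(K) = \inf_{n\ge 1}\frac1n\log\S(K^{\ox n})$, so $S_{0,NS}(K)>0$ fails precisely when $\S(K^{\ox n})=1$ for some $n$ (since $\S$ is always at least $1$, being the simulation cost of a channel, and is sub-multiplicative so the sequence $\log\S(K^{\ox n})$ is sub-additive with infimum equal to the limit). Hence the statement reduces to: $\S(K^{\ox n})=1$ for some $n$ if and only if $K$ is trivial.

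For the ``if'' direction, suppose $K$ is trivial, i.e.\ there is a constant channel $\cN_0:\rho\mapsto\proj\beta$ with $K(\cN_0)<K$. A constant channel can be simulated with zero communication: its Choi-Jamio\l{}kowski matrix is $J = \1_A\ox\proj\beta$, and taking $T_B=\proj\beta$ in SDP(\ref{eq:Sigma}) (with $V_{AB}=J$, which satisfies $P_{AB}V_{AB}=V_{AB}$ since $K(\cN_0)<K$) gives $\tr T_B = 1$, so $\S(K)\le 1$, hence $\S(K)=1$ and a fortiori $S_{0,NS}(K)=0$. For the ``only if'' direction, suppose $S_{0,NS}(K)=0$, so $\S(K^{\ox n})=1$ for some $n$. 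Then there is a channel $\cN$ with $K(\cN)<K^{\ox n}$ and $\S(\cN)=1$; I would argue that $\S(\cN)=1$ forces $\cN$ to be (close to, then exactly) a constant channel. Concretely, from the dual SDP for $\S(\cN)$, $\S(\cN)=\max\tr(J_{AB}U_{AB})$ over $U_{AB}\ge0$ with $\tr_A U_{AB}=\1_B$; value $1$ together with the primal constraint $J_{AB}\le\1_A\ox T_B$, $\tr T_B=1$ forces $T_B$ to be a rank-one projector $\proj\beta$ (any feasible $T_B$ has $\tr T_B\ge\tr_A J_{AB}$-related lower bounds), whence $J_{AB}\le\1_A\ox\proj\beta$ and therefore $J_{AB}=\1_A\ox\proj\beta$, i.e.\ $\cN$ is constant. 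Thus $K^{\ox n}$ is trivial. The final step is to transfer triviality from $K^{\ox n}$ back to $K$: this follows from the capacity characterization quoted in the excerpt, since $C_{0,NS}$ is super-additive (or at least super-multiplicative in the relevant sense), $C_{0,NS}(K^{\ox n})=0$ forces $C_{0,NS}(K)=0$, which by \cite{DSW2015} means $K$ is trivial.

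Alternatively, and perhaps more cleanly, I would avoid the tensor-power detour entirely: combine $\S(K)=1 \iff K$ trivial (the argument of the previous paragraph applied with $n=1$) with the observation that $K$ trivial implies $K^{\ox n}$ trivial (the constant channel $\cN_0^{\ox n}$ witnesses this, with output state $\proj\beta^{\ox n}$), so $K$ trivial $\Rightarrow \S(K^{\ox n})=1$ for all $n \Rightarrow S_{0,NS}(K)=0$; and conversely $S_{0,NS}(K)=0 \Rightarrow \S(K)=1$ (since $\S(K)\le\S(K^{\ox n})^{1/n}$ is false in general, but $\S(K)=1$ whenever $\S(K^{\ox n})=1$ because $\S$ being sub-multiplicative and $\ge 1$ means $1\le\S(K)^n$ is automatic, and one needs instead the lower bound $\S(K^{\ox n})\ge\S(K)$... ) — here I need to be careful. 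The genuinely safe route is: $S_{0,NS}(K)=0$ iff $C_{0,NS}(K)=0$? No — we are proving exactly a statement of that flavor. So I would simply use the sandwich $C_{0,NS}(K)\le C_{\text{minE}}(K)\le S_{0,NS}(K)$ from the excerpt for one direction: $S_{0,NS}(K)=0\Rightarrow C_{0,NS}(K)=0\Rightarrow K$ trivial by \cite{DSW2015}. For the other direction, $K$ trivial $\Rightarrow\S(K)=1$ directly (constant channel, zero communication), and $\S$ sub-multiplicative gives $\S(K^{\ox n})\le\S(K)^n=1$, while $\S\ge1$ always, so $\S(K^{\ox n})=1$ for all $n$, hence $S_{0,NS}(K)=0$. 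This two-line argument needs no analysis of which channels achieve $\S=1$ and no tensor-power triviality transfer.

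The main obstacle, and the one point deserving care, is the claim $\S(K)=1$ when $K$ is trivial: one must verify that $T_B=\proj\beta$ with $V_{AB}=\1_A\ox\proj\beta$ is feasible for SDP(\ref{eq:Sigma}), which requires $\tr((\1-P)_{AB}V_{AB})=0$, i.e.\ that the Choi matrix of the constant channel $\cN_0$ lies in the support of $P_{AB}$ — but this is exactly the hypothesis $K(\cN_0)<K$ unpacked, so it holds. The other constraints $0\le V_{AB}\le\1_A\ox T_B$ and $\tr_B V_{AB}=\1_A$ are immediate. Everything else is bookkeeping with the sub-multiplicativity of $\S$ and the cited equivalence $C_{0,NS}(K)>0\iff K$ non-trivial.
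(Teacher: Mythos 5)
Your ``genuinely safe route'' is essentially the paper's proof: for non-trivial $K$ use the cited sandwich $C_{0,NS}(K)\le S_{0,NS}(K)$ together with $K$ non-trivial $\iff C_{0,NS}(K)>0$, and for trivial $K$ show that the constant channel $\cN_0$ makes the simulation cost vanish (the paper writes this as $S_{0,NS}(K)\le S_{0,NS}(\cN_0)=0$, whereas you exhibit the explicit feasible point $T_B=\proj\beta$, $V_{AB}=\1_A\ox\proj\beta$ to get $\S(K)=1$ and then invoke sub-multiplicativity). The earlier digressions about whether $\S(K^{\ox n})=1$ forces $\S(K)=1$ and about ``transferring triviality'' across tensor powers are unnecessary detours that your final paragraph correctly abandons.
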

\begin{proof}
If $K$ is non-trivial, it is obvious that $S_{0,NS}(K)\ge C_{0,NS}(K)>0$. Otherwise, $0\le S_{0,NS}(K)\le S_{0,NS}(\cN_0)=0$, which means that $S_{0,NS}(K)=0$.
\end{proof}

\subsection{A lower bound}
Let us introduce a revised SDP which has the same simplified form in cq-channel case:
\begin{equation}\begin{split}
  \label{eq:Sigma2 dual}
  \S^-(K) &= \max \tr S_A \ \text{ s.t. }\    S_A\ge 0, U_{AB}\ge 0\ \tr_A U_{AB} = \1_B, \\
        &\phantom{= \max \tr S_A \text{ s.t. }} P_{AB}(S_A \ox \1_B - U_{AB})P_{AB} \leq 0,
\end{split}\end{equation}

\begin{lemma}\label{lowerbound}
For any non-commutative bipartite graphs $K_1$ and $K_2$,
$$ \S^-(K_1 \ox K_2)  \ge \S^-(K_1)\S^-(K_2).  $$
Consequently,
$\S^-({K_1})  \S^-({K_2}) \le \S({K_1} \otimes {K_2}) \le \S({K_1})  \S({K_2})$.
\end{lemma}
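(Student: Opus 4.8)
The plan is to mimic the proof of Theorem \ref{sufficient}, using the key observation that the extra constraint $S_A \ge 0$ appearing in SDP (\ref{eq:Sigma2 dual}) forces $P_{AB}(S_A \ox \1_B)P_{AB} \ge 0$ automatically, so that the sufficient condition (\ref{sufficient condition}) holds for free in \emph{both} tensor factors. Concretely, I would take optimal solutions $\{S_{A_1},U_1\}$ and $\{S_{A_2},U_2\}$ of SDP (\ref{eq:Sigma2 dual}) for $\S^-(K_1)$ and $\S^-(K_2)$, write $P_1 = P_{A_1 B_1}$ and $P_2 = P_{A_2 B_2}$, and propose $\{S_{A_1}\ox S_{A_2},\, U_1\ox U_2\}$ as a feasible point for $\S^-(K_1\ox K_2)$. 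The easy constraints are immediate: $S_{A_1}\ox S_{A_2}\ge 0$, $U_1\ox U_2\ge 0$, and $\tr_{A_1 A_2}(U_1\ox U_2)=\1_{B_1 B_2}$.

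The only nontrivial check is the last constraint $(P_1\ox P_2)(S_{A_1}\ox S_{A_2}\ox\1_{B_1 B_2} - U_1\ox U_2)(P_1\ox P_2)\le 0$. From the constraint of SDP (\ref{eq:Sigma2 dual}) for each $K_i$ we get $P_i(S_{A_i}\ox\1_{B_i})P_i \le P_i U_i P_i$; moreover $S_{A_i}\ge 0$ gives $0 \le P_i(S_{A_i}\ox\1_{B_i})P_i$ and $U_i\ge 0$ gives $0\le P_i U_i P_i$. Tensoring the $i=1$ inequality with the positive semidefinite operator $P_2(S_{A_2}\ox\1_{B_2})P_2$, then tensoring the $i=2$ inequality with the positive semidefinite operator $P_1 U_1 P_1$, and chaining the two, yields
\[
(P_1\ox P_2)(S_{A_1}\ox S_{A_2}\ox\1_{B_1 B_2})(P_1\ox P_2) \le P_1 U_1 P_1 \ox P_2 U_2 P_2 = (P_1\ox P_2)(U_1\ox U_2)(P_1\ox P_2).
\]
Hence the proposed point is feasible, and since its objective value is $\tr(S_{A_1}\ox S_{A_2}) = \tr S_{A_1}\,\tr S_{A_2} = \S^-(K_1)\S^-(K_2)$, we conclude $\S^-(K_1\ox K_2)\ge \S^-(K_1)\S^-(K_2)$.

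For the ``consequently'' part, I would note that SDP (\ref{eq:Sigma2 dual}) is obtained from SDP (\ref{maxSDP-K}) by imposing the additional constraint $S_A\ge 0$, so it maximizes over a smaller feasible set; hence $\S^-(K)\le \S(K)$ for every $K$, and in particular $\S^-(K_1\ox K_2)\le \S(K_1\ox K_2)$. Combining this with the lower bound just established and with the known sub-multiplicativity $\S(K_1\ox K_2)\le \S(K_1)\S(K_2)$ gives the sandwich $\S^-(K_1)\S^-(K_2)\le \S(K_1\ox K_2)\le \S(K_1)\S(K_2)$.

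I do not anticipate a real obstacle, since this is a direct variant of Theorem \ref{sufficient}; the one point requiring care is the direction of operator monotonicity under tensoring ($A\le B$ and $C\ge 0$ imply $A\ox C\le B\ox C$) and ensuring that \emph{each} factor being multiplied in at each step is genuinely positive semidefinite — which is exactly where the constraint $S_A\ge 0$, absent from the original SDP (\ref{maxSDP-K}), is doing the work.
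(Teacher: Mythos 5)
Your proof is correct and follows essentially the same route as the paper: the paper proves this lemma by remarking that the extra constraint $S_A\ge 0$ ensures $P_{AB}(S_A\ox\1_B)P_{AB}\ge 0$, which is exactly the hypothesis of Theorem~\ref{sufficient}, and then invokes the same tensoring argument; your write-up simply spells out the operator-monotonicity chain in detail. The ``consequently'' part is also handled identically, via $\S^-(K)\le\S(K)$ (smaller feasible set) and the known sub-multiplicativity of $\S$.
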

\begin{proof}
From SDP (\ref {eq:Sigma2 dual}), noting that $P_{AB}(S_A \ox \1_B)P_{AB} \ge 0$, it is easy to prove  $\S^-(K_1 \ox K_2)  \ge \S^-(K_1)\S^-(K_2)$ by similar technique applied in Theorem 3.
Therefore,
$\S^-({K_1})  \S^-({K_2}) \le \S^-({K_1} \otimes {K_2}) \le \S({K_1} \otimes {K_2})\le \S({K_1})  \S({K_2})$.
\end{proof}

\begin{proposition}
For  a general non-commutative bipartite graph $K$,
$$\log {\S^ - }(K) \le {S_{0,NS}}(K) \le \log \S(K).$$
\end{proposition}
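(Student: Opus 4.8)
The plan is to establish the two inequalities separately, both by regularization arguments that reduce to the one-shot relations already proved. For the upper bound $S_{0,NS}(K)\le\log\S(K)$, recall that $S_{0,NS}(K)=\inf_{n\ge1}\frac1n\log\S(K^{\ox n})$, so it suffices to observe that $\S$ is sub-multiplicative (stated in the excerpt right after SDP~(\ref{maxSDP-K})), whence $\S(K^{\ox n})\le\S(K)^n$ and therefore $\frac1n\log\S(K^{\ox n})\le\log\S(K)$ for every $n$; taking the infimum gives the claim. This half is essentially immediate from what is already in the paper.

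For the lower bound $\log\S^-(K)\le S_{0,NS}(K)$, I would first upgrade Lemma~\ref{lowerbound} to tensor powers: applying that lemma inductively (or directly, since the same feasible-solution construction $\{S_A^{\ox n},U_{AB}^{\ox n}\}$ works, using $P_{AB}^{\ox n}(S_A^{\ox n}\ox\1)P_{AB}^{\ox n}\ge0$ because each factor $P_{AB}(S_A\ox\1_B)P_{AB}\ge0$ by the constraint $S_A\ge0$ in SDP~(\ref{eq:Sigma2 dual})) yields $\S^-(K^{\ox n})\ge\S^-(K)^n$, and combining with $\S^-\le\S$ gives $\S^-(K)^n\le\S(K^{\ox n})$. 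Hence $\frac1n\log\S(K^{\ox n})\ge\log\S^-(K)$ for every $n$, and taking the infimum over $n$ gives $S_{0,NS}(K)\ge\log\S^-(K)$.

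Putting the two halves together yields $\log\S^-(K)\le S_{0,NS}(K)\le\log\S(K)$, as desired. The only mild subtlety — the ``hard part'', such as it is — is verifying that the super-multiplicativity argument for $\S^-$ in Lemma~\ref{lowerbound} genuinely iterates to arbitrary tensor powers, i.e.\ that the product solution $\{S_A^{\ox n},U_{AB}^{\ox n}\}$ remains feasible for SDP~(\ref{eq:Sigma2 dual}) on $K^{\ox n}$; this is where one must be careful that both the positivity constraint $S_A^{\ox n}\ge0$ and the projected inequality are preserved, but both follow from positivity of tensor products of positive operators exactly as in the proof of Theorem~\ref{sufficient}. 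Everything else is bookkeeping with the definition of the regularized quantity.
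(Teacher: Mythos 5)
Your proposal is correct and follows essentially the same route as the paper: the upper bound from sub-multiplicativity of $\S$, and the lower bound by iterating the super-multiplicativity of $\S^-$ from Lemma~\ref{lowerbound} together with $\S^-\le\S$ to get $\S^-(K)^n\le\S(K^{\otimes n})\le\S(K)^n$. Your additional remark about verifying feasibility of $\{S_A^{\ox n},U_{AB}^{\ox n}\}$ under tensor powers is a fair point that the paper leaves implicit, but the underlying argument is identical.
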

\begin{proof}
By Lemma \ref{lowerbound}, it is easy to see that
${\S^ - }{(K)^n} \le \S(K^{ \otimes n}) \le \S{(K)^n}$.
Then, $\log {\S^ - }(K) \le {S_{0,NS}}(K) \le \log \S(K)$.
Also, it is obvious that ${S_{0,NS}}(K)$ will equal to $\log \S(K)$ when $\S^-(K)=\S(K).$
\end{proof}

\section{Conclusions}
In sum, for two different non-commutative bipartite graphs, we give sufficient conditions for the multiplicativity of one-shot simulation cost as well as the additivity of the asymptotic simulation cost. The case of cheapest-full-rank non-commutative bipartite graphs has been completely solved while the cheapest-low-rank graphs have a more complex structure. We further show that the one-shot no-signalling assisted classical zero-error simulation cost of non-commutative bipartite graphs is not multiplicative. We provide a lower bound of $\S(K)$ such that the asymptotic zero-error simulation cost can be estimated by $\log\S^-(K)\le S_{0,NS}(K) \le \log\S(K)$.

It is of great interest to know whether the sufficient condition of multiplicativity in Theorem \ref{sufficient} is also necessary. It also remains unknown about the additivity of the asymptotic simulation cost of general non-commutative bipartite graphs and whether it equals to  $\log\S^-(K)$.

\section*{Acknowledgments}
We would like to thank Andreas Winter for his interest on this topic and for many insightful suggestions. XW would like to thank Ching-Yi Lai for helpful discussions on SDP. This work was partly supported by the Australian Research Council (Grant No. DP120103776 and No. FT120100449) and the National Natural Science Foundation of China (Grant No. 61179030).



\bibliographystyle{IEEEtran}
%

\end{document}